\newcommand{\busy}[4]{
\begin{picture}(#4,3)
\put(0,0){\makebox(1,1)[c]{#1}}
\put(#4,0){\makebox(-1,1)[c]{#2}}
\put(0,1){\framebox(#4,1){}}
\put(0,2){\makebox(#4,1){#3}}
\end{picture}
}
\newcommand{\request}[2]{
\begin{picture}(4,4)
\put(0,3){\vector(0,-1){1}}
\put(0,3){\makebox(4,1)[l]{#1[#2]}}
\end{picture}
}
\newcommand{\requestsymb}{
\begin{picture}(1,1)
\put(0.5,1){\vector(0,-1){1}}
\end{picture}
}
\newcommand{\mkf}{$(m,k)$--firm}
\newcommand{\drop}[1]{}
\newtheorem{Theorem}{Theorem}
\newtheorem{Observation}{Observation}
\newtheorem{Definition}{Definition}
\newtheorem{Corollary}[Theorem]{Corollary}
\newcommand{\equals}{\stackrel{\mathrm{def}}{=}}
\newcommand{\lcm}{\mathrm{lcm}}
\newcommand{\bigOh}{\mathcal{O}}
\newtheorem{Example}{Example}
\begin{document}

\title{\mkf{} constraints and DBP scheduling: impact of the initial $k$-sequence and exact schedulability test}

\date{\today}

\author{
Jo\a"el Goossens\\Universit\a'e Libre de Bruxelles (U.L.B.)\\Brussels, Belgium\\joel.goossens@ulb.ac.be
}

\maketitle
\thispagestyle{empty}

\begin{abstract}
In this paper we study the scheduling of $(m,k)$--firm synchronous periodic task systems using the Distance Based Priority (DBP) scheduler. We first show three phenomena: (i) choosing, for each task, the initial $k$-sequence $1^k$ is not optimal, (ii) we can even start the scheduling from a (fictive) error state (in regard to the initial $k$-sequence) and (iii) the period of feasible DBP-schedules is not necessarily the task hyper-period. We then show that any feasible DBP-schedule is periodic and we upper-bound the length of that period. Lastly, based on our periodicity result we provide an exact schedulability test.
\end{abstract}

\paragraph{Keywords.} $(m,k)$--firm constraints, real-time scheduling, uniprocessor, periodic tasks, non-preemptive tasks.

\section{Introduction}\label{intro} 
In this paper we consider the scheduling of $(m,k)$--firm real-time periodic task sets. The $(m,k)$--firm model was initially introduced by {Hamdaoui} et al.~\cite{Hamdaoui1995}; this model is intermediary between \emph{hard} real-time constraints (where deadline misses are fatal) and \emph{soft} real-time constraints (where deadline misses are tolerated but minimized). Indeed, the $(m,k)$--firm constraint imposes that ``few'' request deadlines can be missed. More formally, for each task, the $(m,k)$--firm constraint is characterized by two parameters: $m$ and $k$, and the constraint requires that at least $m$  requests meet their deadline for any $k$ consecutive requests of the task. This model is very often used to handle the scheduling of messages on real-time networks, the tasks represent the handling of network frames and are consequently and inherently \emph{non-preemptive} in the model. Hence, in the following, the term task denotes either a non-preemptive task (scheduled on a \emph{uni}processor) or a message (scheduled on a network channel). The seminal paper of {Hamdaoui} et al.~\cite{Hamdaoui1995} defined also a scheduling algorithm for that kind of constraints: the Distance Based Priority (DBP in short). Under DBP, the priority of each task is dynamic and is based on the number of task request[s] which can miss their deadline without violating the $(m,k)$--firm constraint. 

\paragraph{Related research.} 
Since the seminal paper of \mbox{Hamdaoui} et al.~\cite{Hamdaoui1995}, there is a huge literature about scheduling firm real-time systems for various kinds of schedulers. It is out of the scope of this paper to summarize that literature. Our study concerns a specific task model ($(m,k)$--firm periodic tasks) for a specific scheduler (DBP). Concerning the more specific case of $(m,k)$--firm constraints and the scheduler DBP, the literature proposes sufficient schedulability conditions~\cite{Li2004}. \mbox{Hamdaoui} in~\cite{Hamdaoui1997} evaluated the probability failure; \mbox{Lindsay} et al.~\cite{Lindsay1997} extend the analysis to handle point-to-point networks; \mbox{Stringel} et al.~\cite{Striegel2000} handle multihop networks; \mbox{Poggi} et al.~\cite{Poggi2003} introduce matrix-DBP scheduling and show the improvement. Related particular cases include the work of Jeffay et al.~\cite{Jeffay2005On-non-preempti} and the work of Quan et al.~\cite{Quan2000}.

In all those papers, the choice/value of the initial $k$-sequences\footnote{Informally, for each task, the $k$-sequence represents the recent past of the task in regard with deadline failures. See Definition~\ref{def:kseq} for a formal definition.} used by DBP to start the scheduling is not really studied. There is no discussion in the literature about the value of the initial $k$-sequence. Very often, the researchers seem to assume that choosing the pattern\footnote{In this document we use the formal language notation, $w^{\alpha}$ means $w$ repeated $\alpha$ times.} $1^k$ for those sequences is optimal regarding the system schedulability without any discussion. Authors also consider very often, e.g., in experimental studies, that examining the first hyper-period of the schedule is significant to conclude. Lastly, the literature, to the best of our knowledge, only proposes sufficient (or necessary) schedulability tests.

\paragraph{This research.} In this research we show that the choice of the initial $k$-sequences is significant and that the period of the schedule can be larger than the hyper-period.
We first show three phenomena: (i) choosing, for each task, the initial $k$-sequence $1^k$ is not optimal, (ii)  we can even start the scheduling from a (fictive) error state (in regard to the initial $k$-sequence) and (iii) the period of feasible DBP-schedules is not necessarily the task hyper-period. We then show that any feasible DBP-schedule is periodic and we upper-bound the length of that period. Lastly, based on our periodicity result we provide an exact schedulability test. From the best of our knowledge, this is the first \emph{exact} (i.e., necessary and sufficient) such test. 

\paragraph{Organization.} Section~\ref{sec:model} presents our model of computation, preliminary definitions and assumptions. Section~\ref{sec:counterexample} presents the three (counter-intuitive) important phenomena, generally ignored by the literature, concerning the initial $k$-sequences. Section~\ref{sec:periodicity} studies the periodicity of DBP schedules. Section~\ref{sec:exacttest} provides a feasibility interval and, based on our periodicity result, our exact schedulability test. Lastly, in Section~\ref{sec:conclusion}, we conclude.

\section{Definitions and assumptions}\label{sec:model}

We consider the scheduling of synchronous periodic task systems. A
system $\tau$ is composed by $n$ periodic tasks $\tau_1,
\tau_2, \ldots, \tau_n$, each task is characterized by a
period $T_i$, a relative deadline $D_i$, an execution requirement 
$C_i$. Such a periodic task generates an
infinite sequence of jobs, with the $k^{\text{th}}$ job
arriving at time-instant $(k - 1)T_i$ ($k = 1, 2,
\ldots$), having an execution requirement of $C_{i}$ units,
and a deadline at time-instant $(k-1)T_{i}+
D_{i}.$ We consider constrained-deadline systems, i.e., $D_{i} \leq T_{i}$. Two additional task characteristics are $m_{i}$ and $k_{i}$ which mandate that at least $m_{i}$ out of any $k_{i}$ consecutive jobs of $\tau_{i}$ must meet its deadline. We consider in this paper a discrete model, i.e., the characteristics of the tasks and the time are integers. 

In order to schedule dynamically such a system, the scheduler will typically based its decision on the $k$-sequence for each task, with the following definition:

\begin{Definition}[$k$-sequence]\label{def:kseq}
The $k$-sequence of task $\tau_{i}$ is a binary string $W = [w_{i,1}, w_{i,2}, \ldots, w_{i,k_{i}}]$ which represents the recent past of the task jobs. By definition a deadline miss corresponds to the value `0' and a deadline met to the value `1', the leftmost bit represents the oldest job.
\end{Definition}

Notice that at time 0, this notion of $k$-sequence is not really defined since the ``past'' is somewhat  undefined, or empty. We assume that initially, at time 0, the scheduler based its decision on an \emph{initial} $k$-sequence for each task; generally the authors consider the string $1^{k_{i}}$ for task $\tau_{i}$, but we will see that is not optimal and causes a loss of generality.

The algorithm DBP assigns a dynamic priority to each active task (say $\tau_{i}$) based on the number of task request[s] which can miss their deadline without violating the $(m_{i},k_{i})$--firm constraint. In other words, DBP bases its decision on the \emph{distance} between the current state ($k_{i}$-sequence) and the closest error state (i.e., where the number of `1' is less than $m_{i}$). For instance, if $\tau_{i}$ is subject to a $(2,3)$--firm constraint and if the current $3$-sequence is $[101]$, the distance is $1$ (the current job of $\tau_{i}$ must be executed) while if the current $3$-sequence is $[011]$ the distance is 2 (the current request of $\tau_{i}$ could be executed, but if not the next one must be). By definition, any $k$-sequence with less than $m_{i}$ `1' violates the $(m_{i}, k_{i})$--firm constraint and corresponds to an error state.

DBP assigns the highest priority to the active task with the smallest distance; variants of DBP are based on the additional rule used to break ties, e.g., EDF-DBP or RM-DBF if we use the Earliest Deadline First or the Rate Monotonic scheduler. We do not consider a specific tie-broker in this study but we assume that the tie-broker is deterministic and memoryless with the following definitions:

\begin{Definition}[Deterministic algorithm]\label{detAlg} 
A scheduling algorithm is said to
be \emph{deterministic} if it generates a unique schedule for any
given set of jobs.
\end{Definition}

\begin{Definition}[Memoryless algorithm]\label{def:memoryless} 
  A non-preemptive scheduling algorithm is said to be \emph{memoryless} if the
  scheduling decision made by it at time $t$ (which corresponds to job arrival or completion) depends only on the static characteristics of active tasks (i.e., $T_{i}, C_{i}, m_{i}, k_{i}$) and on the current state of the system (i.e., the current $k$-sequence and the time elapsed since the last request, of each task).
\end{Definition}

In the previous definition, it may be noticed that since we consider \emph{non-preemptive} systems, the scheduler does not consider the remaining processing time, the latter is always equal to $C_{i}$ for any active task $\tau_{i}$ at scheduling time.

\section{Study of the initial $k$-sequences}\label{sec:counterexample}
\subsection{The non-optimality of the string $1^{k_{i}}$}

In this section we will see that choosing, for each task (say $\tau_{i}$), the initial $k$-sequence to be $1^{k_{i}}$, i.e., consider that the $k_{i}$ previous requests completed by their deadline is not optimal and causes a loss of generality.

\begin{table}
\begin{center}
\begin{tabular}{|lllll|}
\hline
& $T_{i}$ & $C_{i}$ & $m_{i}$ & $k_{i}$\\
\hline
$\tau_{1}$ & 4 & 1 & 2 & 4\\
$\tau_{2}$ & 10 & 8 & 3 & 4\\
\hline
\end{tabular}
\end{center}
\caption{System characteristics.\label{table:example1}}
\end{table}

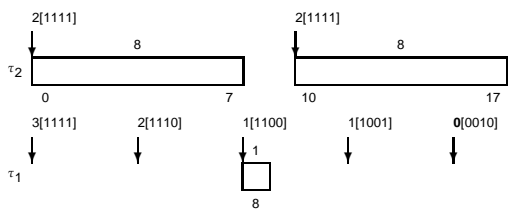
\begin{figure*}
\begin{center}
{\tiny
\setlength{\unitlength}{0.35cm}
\begin{picture}(17,8)
\put(0,1){\makebox(1,1){$\tau_{1}$}}
\put(0,5){\makebox(1,1){$\tau_{2}$}}
\put(1,0){\request{3}{1111}}
\put(1,4){\request{2}{1111}}
\put(1,4){\busy{0}{7}{8}{8}}
\put(5,0){\request{2}{1110}}
\put(9,0){\request{1}{1100}}
\put(9,0){\busy{8}{}{1}{1}}
\put(11,4){\request{2}{1111}}
\put(11,4){\busy{10}{17}{8}{8}}
\put(13,0){\request{1}{1001}}
\put(17,0){\request{\textbf{0}}{0010}}
\end{picture}
}\caption{\label{fig:example1}System not DBP-schedulable choosing the initial $k$-sequence $1^k$: at time 16, $\tau_{1}$ reaches an error state, the $(2,4)$--firm constraint of $\tau_{1}$ being violated.}
\end{center}
\end{figure*}

\begin{Observation}[Non-optimality of $1^{*}$]
Choosing the pattern $1^{k_{i}}$ for the initial $k$-sequence of each periodic task $\tau_{i}$ is not optimal under DBP.
\end{Observation}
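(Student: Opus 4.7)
The observation asks us to exhibit a task system for which some non-$1^{*}$ choice of initial $k$-sequences yields a feasible DBP-schedule while the default choice $1^{k_i}$ does not. My strategy is therefore a witness-based argument using the two-task instance of Table~\ref{table:example1}, proceeding in two halves.

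First, I would show that the default choice fails. Starting both tasks from $[1111]$ and simulating DBP, at $t=0$ the task $\tau_2$ (with $(m,k)=(3,4)$) has a strictly smaller DBP-distance than $\tau_1$ (with $(m,k)=(2,4)$); since scheduling is non-preemptive, $\tau_2$ occupies the processor on $[0,8)$. Propagating the $k$-sequences at each subsequent arrival and completion, as summarised in Figure~\ref{fig:example1}, the $k$-sequence of $\tau_1$ contains fewer than $m_1 = 2$ ones at $t=16$, violating its \mkf{} constraint, so the claim of non-optimality is established provided some alternative initial $k$-sequence is feasible.

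Second, I would exhibit an alternative initial $k$-sequence for which DBP is feasible. The natural move is to pick an initial sequence that flips the first scheduling decision, for instance one for $\tau_1$ whose DBP-distance is strictly smaller than that of $\tau_2$'s $[1111]$; then $\tau_1$ executes first at $t=0$, completes by $t=1$, and $\tau_2$ runs on $[1,9)$, after which the workload pattern is benign. The main obstacle is then proving that no error state ever arises for this alternative, which is an infinite-horizon property. Here one either identifies a repeating system state (current $k$-sequences plus time-since-last-request for every task) by direct simulation and closes the loop by hand, or defers the check to the periodicity result of Section~\ref{sec:periodicity} together with the exact schedulability test of Section~\ref{sec:exacttest}, which guarantees that a finite simulation suffices to conclude feasibility.
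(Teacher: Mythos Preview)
Your proposal follows essentially the same approach as the paper: a counter-example argument using the two-task instance of Table~\ref{table:example1}, first showing that the $1^{k_i}$ choice drives $\tau_1$ into an error state at $t=16$ (Figure~\ref{fig:example1}), and then exhibiting an alternative choice of initial $k$-sequences that yields a feasible periodic schedule. The only noteworthy difference is in the second half: the paper commits to the concrete alternative $[0101]$ for $\tau_1$ and $[1111]$ for $\tau_2$, where both tasks have DBP-distance~$2$ at $t=0$ and the tie is broken in favour of $\tau_1$ (under either RM or EDF), and then verifies feasibility by observing that the system state repeats (Figure~\ref{fig:example2}); you instead suggest choosing $\tau_1$'s initial sequence so that its distance is \emph{strictly} smaller than $\tau_2$'s, which sidesteps the tie-breaker but leaves the specific sequence and the actual feasibility check unperformed. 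To complete your version you would still need to name a concrete sequence and close the loop by hand, exactly as the paper does.
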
 

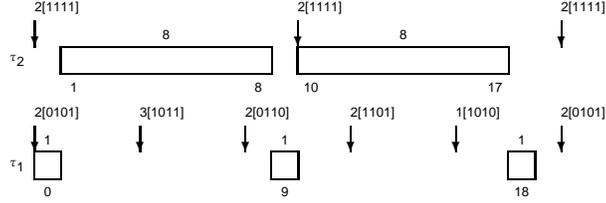
\begin{figure*}
\begin{center}
{\tiny
\setlength{\unitlength}{0.35cm}
\begin{picture}(21,8)
\put(0,1){\makebox(1,1){$\tau_{1}$}}
\put(0,5){\makebox(1,1){$\tau_{2}$}}
\put(1,0){\request{2}{0101}}
\put(1,4){\request{2}{1111}}
\put(1,0){\busy{0}{}{1}{1}}
\put(2,4){\busy{1}{8}{8}{8}}
\put(5,0){\request{3}{1011}}
\put(9,0){\request{2}{0110}}
\put(10,0){\busy{9}{}{1}{1}}
\put(11,4){\request{2}{1111}}
\put(11,4){\busy{10}{17}{8}{8}}
\put(13,0){\request{2}{1101}}
\put(17,0){\request{1}{1010}}
\put(19,0){\busy{18}{}{1}{1}}
\put(21,0){\request{2}{0101}}
\put(21,4){\request{2}{1111}}
\end{picture}
}\caption{\label{fig:example2}The system is DBP-schedulable since all $(m,k)$--firm constraints are met in $[0,20)$ and the system is in the same state at time 20, than at time 10.}
\end{center}
\end{figure*}

\begin{proof}
The proof is based on a (counter-)example which exhibits a system which is not DBP-schedulable choosing the pattern $1^{k_{i}}$ for the initial $k$-sequence of each periodic task $\tau_{i}$ while the system is DBP-schedulable for another initial $k$-sequences.
Our counter-example is described by Table~\ref{table:example1}. Figure~\ref{fig:example1}
shows\footnote{In our figures, picture~{\tiny \setlength{\unitlength}{0.35cm}\requestsymb} represents a task request, we denote by $d[W]$ the corresponding $k$-sequence $W$ and the current distance $d$ from the closest error state.} that the system is not DBP-schedulable if we start to schedule the system with the $k$-sequence $1^{k_{i}}$ for each task $\tau_{i}$. (Notice that Figure~\ref{fig:example1} corresponds also to both RM-DBP and EDF-DBP schedules.) 

On the contrary, with the following $k$-sequences: $[0101], [1111]$ for $\tau_{1}$ and $\tau_{2}$, respectively, the system is DBP-schedulable as illustrated by Figure~\ref{fig:example2}. (Notice that Figure~\ref{fig:example2} corresponds to both RM-DBP and EDF-DBP schedules.)
\end{proof}

\subsection{Starting from an error state}

In the previous section we saw that starting with the $k$-sequence $1^k$ is not always optimal, we also show, with the next example, that we can even start from an error state.

\begin{Example}
We consider this time the very same periodic task system than the one considered in the previous section (i.e., defined by Table~\ref{table:example1}) but we consider the following initial $k$-sequences: $[0010], [1011]$ for $\tau_{1}$ and $\tau_{2}$, respectively. Although we consider an initial fallacious $k$-sequence for $\tau_{1}$ (since $[0010]$ includes a single `1'), Figure~\ref{fig:example3} shows that the system is DBP-schedulable: all $(m,k)$--firm constraints are met (since they are met in $[0,20)$, and since at time 20 we reach the same system state than the schedule illustrated by Figure~\ref{fig:example2} (at time 0), consequently the feasible schedule repeats from time 20).
\end{Example}

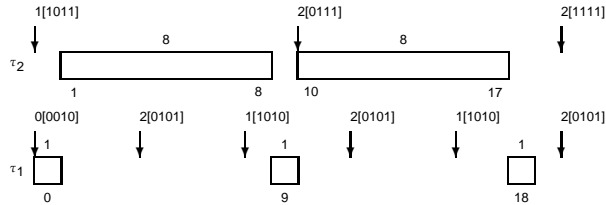
\begin{figure*}
\begin{center}
{\tiny
\setlength{\unitlength}{0.35cm}
\begin{picture}(21,8)
\put(0,1){\makebox(1,1){$\tau_{1}$}}
\put(0,5){\makebox(1,1){$\tau_{2}$}}
\put(1,0){\request{0}{0010}}
\put(1,4){\request{1}{1011}}
\put(1,0){\busy{0}{}{1}{1}}
\put(2,4){\busy{1}{8}{8}{8}}
\put(5,0){\request{2}{0101}}
\put(9,0){\request{1}{1010}}
\put(10,0){\busy{9}{}{1}{1}}
\put(11,4){\request{2}{0111}}
\put(11,4){\busy{10}{17}{8}{8}}
\put(13,0){\request{2}{0101}}
\put(17,0){\request{1}{1010}}
\put(19,0){\busy{18}{}{1}{1}}
\put(21,0){\request{2}{0101}}
\put(21,4){\request{2}{1111}}
\end{picture}
}\caption{\label{fig:example3}The system is DBP-schedulable since all $(m,k)$--firm constraints are met in $[0,20)$ and the fact that the system is in the same state at time 20 than at time 
0 in Figure~\ref{fig:example2}.}
\end{center}
\end{figure*}

We leave open the question of choosing efficiently/optimally the initial $k$-sequences and we assume in the following that those sequences are already chosen but are possibly different from $1^{k_{i}}$.

\section{The periodicity of DBP-schedules}\label{sec:periodicity}

In this section we will study the periodicity of DBP-schedules, an instrumental property to design an exact schedulability test. 

It is quite obvious that any feasible (deterministic and memoryless) schedule is periodic (we will show that property below). Previous examples show that the periodic part of the schedule does not  necessarily start from the origin (e.g., Figure~\ref{fig:example3}). Another important phenomenon is the fact that the period of the schedule is not necessarily equal to the hyper-period $P \equals \lcm\{T_{i} \mid i=1,\ldots,n\}$ as exhibited by the next example.

\begin{Example}
Consider the task characteristics described by Table~\ref{table:example2}. Figure~\ref{fig:example4} shows that the system is DBP-schedulable and that the period of the schedule is \emph{twice} the hyper-period (3). The schedule repeats from time 6 but the period is 6. 
\end{Example}

\begin{table}
\begin{center}
\begin{tabular}{|lllll|}
\hline
& $T_{i}$ & $C_{i}$ & $m_{i}$ & $k_{i}$\\
\hline
$\tau_{1}$ & 3 & 2 & 1 & 3\\
$\tau_{2}$ & 3 & 2 & 1 & 4\\
\hline
\end{tabular}
\end{center}
\caption{System characteristics.\label{table:example2}}
\end{table}

\begin{figure*}
\begin{center}
{\tiny
\setlength{\unitlength}{0.35cm}
\begin{picture}(14,8)
\put(0,1){\makebox(1,1){$\tau_{1}$}}
\put(0,5){\makebox(1,1){$\tau_{2}$}}
\put(1,0){\request{3}{111}}
\put(1,4){\request{3}{111}}
\put(1,0){\busy{0}{1}{2}{2}}

\put(4,0){\request{3}{111}}
\put(4,4){\request{2}{110}}

\put(4,4){\busy{4}{5}{2}{2}}

\put(7,0){\request{2}{110}}
\put(7,4){\request{3}{101}}

\put(7,0){\busy{6}{7}{2}{2}}

\put(10,0){\request{3}{101}}
\put(10,4){\request{2}{010}}

\put(10,4){\busy{9}{10}{2}{2}}

\put(13,0){\request{2}{010}}
\put(13,4){\request{3}{101}}

\put(13,0){\busy{12}{13}{2}{2}}

\put(16,0){\request{3}{101}}
\put(16,4){\request{2}{010}}
\end{picture}
}\caption{\label{fig:example4}The system is DBP-schedulable since all $(m,k)$--firm constraints are met in $[0,15)$ and the system is in the same state at time 15 than at time 9.}
\end{center}
\end{figure*}
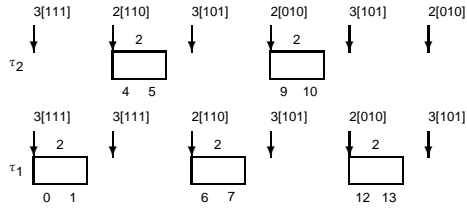

Now we will prove the schedule periodicity and (upper-)bound the length of that period. That result (and its proof) is instrumental to design our \emph{exact} schedulability test for the DBP algorithm (see Algorithm~\ref{algo:exact}).

\begin{Theorem}[Period upper-bound]\label{theorem:periodicity}
Any feasible DBP-schedule of a synchronous $(m,k)$--firm periodic task set is finally periodic. Moreover, the period of the schedule is a multiple of $P$, upper-bounded by $\prod_{i=1}^{n} \sum_{j=m_{i}}^{k_{i}} {k_{i}\choose j} \times P$.
\end{Theorem}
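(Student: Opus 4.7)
The plan is to exploit the fact, established in Definition~\ref{def:memoryless}, that DBP is deterministic and memoryless, so the future schedule is entirely determined by the current system state. I will then argue that only finitely many relevant states exist, and conclude periodicity by a pigeonhole argument.

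First, I would pin down what the state of the system looks like at an arbitrary time $t$: it consists, for every task $\tau_i$, of its current $k_i$-sequence and of the time elapsed since its last request. To get rid of the ``time since last request'' component, I would restrict attention to the time-instants $t_\ell \equals \ell \cdot P$ for $\ell = 0,1,2,\ldots$, where $P$ is the hyper-period. Because the system is synchronous and $P$ is a common multiple of all the $T_i$, every task $\tau_i$ has a request exactly at each $t_\ell$, so the ``time since last request'' field is uniformly zero at these sampling instants. Consequently the state at $t_\ell$ is completely described by the $n$-tuple of current $k_i$-sequences.

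Next I would count the number of possible such tuples in a feasible schedule. In a feasible DBP-schedule, the $k_i$-sequence of $\tau_i$ never enters an error state, i.e., it always contains at least $m_i$ ones; hence for task $\tau_i$ there are at most $\sum_{j=m_i}^{k_i}\binom{k_i}{j}$ admissible $k_i$-sequences. The state at any $t_\ell$ therefore belongs to a finite set of cardinality at most $N \equals \prod_{i=1}^{n}\sum_{j=m_{i}}^{k_{i}}\binom{k_{i}}{j}$. Applying the pigeonhole principle to the $N+1$ states observed at $t_0, t_1, \ldots, t_N$, there must exist $0 \le \ell_1 < \ell_2 \le N$ with identical states at $t_{\ell_1}$ and $t_{\ell_2}$. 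By determinism and memorylessness, the schedule restricted to $[t_{\ell_1}, t_{\ell_2})$ repeats identically from $t_{\ell_2}$ onwards, so the schedule is finally periodic with period $(\ell_2 - \ell_1) P$, which is a positive multiple of $P$ of value at most $N \cdot P$, establishing the stated bound.

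The argument is mostly bookkeeping, and I do not expect any genuine obstacle, but the one subtle point is ensuring that sampling only at hyper-period boundaries suffices to capture the state for the purpose of future scheduling decisions. The key justification is precisely the synchronization of all task releases at each $t_\ell$, which makes the ``time elapsed since last request'' trivially zero for every task and therefore omits it from the state description — this is exactly what the memoryless assumption of Definition~\ref{def:memoryless} needs in order for equality of states to force equality of the subsequent schedules.
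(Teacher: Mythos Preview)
Your proof is correct and follows essentially the same line as the paper's: sample the system state at hyper-period boundaries $\ell P$, observe that the state there reduces to the $n$-tuple of $k$-sequences, bound the number of feasible tuples by $\prod_{i}\sum_{j=m_i}^{k_i}\binom{k_i}{j}$, and conclude by pigeonhole together with determinism and memorylessness. The paper additionally invokes the constrained-deadline and feasibility hypotheses to argue that every request released strictly before $\ell P$ has already completed by $\ell P$ (so no in-flight non-preemptive job contaminates the state at the sampling instants), a point your write-up addresses only implicitly.
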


\begin{proof}
DBP is an on-line scheduling algorithm, i.e., it takes its decisions based on the (static) task characteristics and the (dynamic) current state of the system: the current $k$-sequence and the time elapsed since the last request, of each task. Since we consider \emph{synchronous} periodic, \emph{constrained-deadline}, and  DBP-feasible task systems, we know that at each instant $t = k \cdot P$ ($k \in \mathbb{N}$) all requests which occur strictly before $t$ have completed their execution. Moreover, a new request of each task occurs at time $t$. Consequently, regarding the time elapsed since the last request, the system is in the same state at time $0, P, 2P\ldots$. The only difference (if any) concerns the $k$-sequences. For a given task (say $\tau_{i}$) we can distinguish between $\sum_{j=m_{i}}^{k_{i}} {k_{i}\choose j}$ distinct $k$-sequences (i.e., $k$-sequences with at least $m_{i}$ `1' of length $k_{i}$ ). Consequently, in the worst case we have to consider $\prod_{i=1}^{n} \sum_{j=m_{i}}^{k_{i}} {k_{i}\choose j}$ hyper-periods before reaching a state already considered in the past. The property follows from the fact we consider a deterministic and memoryless scheduler.
\end{proof}

\section{Exact schedulability test}\label{sec:exacttest}

Now we have the material to provide an exact schedulability test (and an algorithm) for any deterministic and memoryless DBP scheduler. Notice that we assume that the initial $k$-sequences are fixed.

\subsection{Feasibility interval}

\begin{Definition}[Feasibility interval]
  A \emph{feasibility interval} is a finite interval such that if no $(m,k)$--firm constraint is missed  while considering only requests within this interval then no
  $(m,k)$--firm constraint will ever be missed.
\end{Definition}

\begin{Corollary}\label{coro:interval}
The interval $\left[0, \prod_{i=1}^{n} \sum_{j=m_{i}}^{k_{i}}{k_{i}\choose j}\times P\right)$ is a feasibility interval for the scheduling of synchronous constrained-deadline $(m,k)$--firm periodic systems using DBP.
\end{Corollary}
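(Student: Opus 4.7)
The plan is to reuse the pigeonhole construction underlying Theorem~\ref{theorem:periodicity} and convert it into a finite-witness statement. Set $N \equals \prod_{i=1}^{n} \sum_{j=m_{i}}^{k_{i}} \binom{k_{i}}{j}$ and $L \equals N \cdot P$. The hypothesis to exploit is that no $(m_i, k_i)$--firm constraint is violated by any job whose deadline lies in $[0, L)$, and the goal is to conclude that no such constraint is ever violated.

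First I would examine the global state of the system at the $N+1$ instants $t_a = a P$ for $a = 0, 1, \ldots, N$. As in the proof of Theorem~\ref{theorem:periodicity}, at each $t_a$ the synchronous arrivals reset the ``time elapsed since the last request'' component of the state to zero for every task, so the state is entirely captured by the $n$ current $k_i$-sequences. Under the hypothesis, every sampled $k_i$-sequence contains at least $m_i$ ones, hence lies among the $\sum_{j=m_{i}}^{k_{i}} \binom{k_{i}}{j}$ non-error patterns; the total count of admissible joint states at these instants is therefore at most $N$. With $N+1$ samples and at most $N$ labels, the pigeonhole principle forces two sampling indices $a < b$ in $\{0, 1, \ldots, N\}$ to produce identical states.

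Next, I would invoke the determinism and memorylessness of DBP, together with the fact that task releases are periodic with period dividing $P$. These combine to show that the schedule produced from $t_b$ onward is exactly the translate by $(b-a)P$ of the schedule produced from $t_a$ onward; consequently the schedule is ultimately periodic with period $(b-a)P \le NP = L$, becoming periodic no later than $t_a$. Since $[t_a, t_b) \subseteq [0, L)$ hosts no violation by hypothesis, translating by each positive multiple of $(b-a)P$ yields a violation-free covering of $[t_b, \infty)$; combined with the violation-free prefix $[0, t_a) \subseteq [0, L)$, the entire timeline carries no constraint miss, which is what was to be shown.

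The step I expect to need the most care is verifying that the sliding-window nature of the $(m_i, k_i)$--firm constraint is fully exposed by checking each instantaneously-recorded $k_i$-sequence and is not, for instance, secretly violated by a window straddling a hyper-period boundary. Because $D_i \le T_i$, the deadline of every job released strictly before $t_a$ has already elapsed by $t_a$, so the $k_i$ most recent job outcomes are all determined and faithfully captured in the sampled state; any window of $k_i$ consecutive outcomes is therefore either entirely read off from some sampled $k_i$-sequence or read off from a later one produced by the (identical) continuation. This is essentially the same observation that legitimises the state abstraction in Theorem~\ref{theorem:periodicity}, so no technical ingredient beyond the one already supplied by the periodicity proof is required.
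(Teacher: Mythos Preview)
Your proposal is correct and follows essentially the same route as the paper's proof: both invoke the pigeonhole argument on the system states sampled at the hyper-period boundaries $0, P, 2P, \ldots, NP$ (exactly as in the proof of Theorem~\ref{theorem:periodicity}) to force a repeated state within $[0,L]$, and then use determinism and memorylessness to propagate feasibility from the finite window to the whole timeline. The paper compresses all of this into a one-line appeal to Theorem~\ref{theorem:periodicity}, whereas you spell out the pigeonhole count, the translation of the schedule, and the sliding-window check explicitly; no additional idea is introduced or required.
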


\begin{proof}
It is a direct consequence of the proof of Theorem~\ref{theorem:periodicity}: in the worst case the system state at time $0, P, 2P, \left[(\prod_{i=1}^{n} \sum_{j=m_{i}}^{k_{i}}{k_{i}\choose j}) - 1\right] P$ are different but the system state at time $\prod_{i=1}^{n} \sum_{j=m_{i}}^{k_{i}}{k_{i}\choose j} \times P$ by Theorem~\ref{theorem:periodicity} occurred already in the past.
\end{proof}

\subsection{Exact Algorithm}

Based on Corollary~\ref{coro:interval} a straightforward exact schedulability test consists in building the schedule (by means of simulation) in the time interval $[0, \prod_{i=1}^{n} \sum_{j=m_{i}}^{k_{i}}{k_{i}\choose j} \times P)$. We designed a \emph{faster} (in average) algorithm (Algorithm~\ref{algo:exact}) which stops once the schedule repeats (or a failure occurs). The idea is to compare, at each hyper-period, the system state with the previous hyper-period states. Algorithm~\ref{algo:exact} assumes that the function \textbf{Schedule} stops the simulation and returns false once a $(m,k)$--firm constraint is violated. The \textbf{system-state} considered in Algorithm~\ref{algo:exact} is the current $k$-sequence for each task.

\begin{algorithm}
\SetKw{kwschedule}{Schedule}
\SetKw{kwstate}{system-state}
\KwIn{task set $\tau$}
\KwOut{feasible}
\KwData{State St[max]} 
\tcc{$\max = \prod_{i=1}^{n} \sum_{j=m_{i}}^{k_{i}}{k_{i}\choose j}$}
\tcc{array St is indexed from 0}
\Begin{
\tcc{we save the current system state}
St[0] := \kwstate \;
$\ell$ :=  0\;
current-time := 0 \;
periodic := false \;
feasible := true \;

\Repeat{(periodic $\vee$ $\neg$feasible)}{
feasible := \kwschedule\ from current-time to current-time + $P$ \;
current-time := current-time + $P$ \;
\eIf{$\ell = \max - 1$}{
  \For{$j$ := 0 \KwTo $max - 2$}{
    St[$j$] := St[$j+1$]\;
  }
}{$\ell$ :=  $\ell + 1$\;}

St[$\ell$] := \kwstate \;
$j$ := $\ell - 1$ \;
\While{($j>0$ $\wedge$ $\neg$periodic)}{
  periodic := (St[$j$] = St[$\ell$]) \;
  $j$ := $j - 1$\;
}
}

\Return{feasible}\;
}
\caption{Exact DBP-schedulability test.\label{algo:exact}}
\end{algorithm}

The worst-case time complexity of the exact schedulability test is $\bigOh(\prod_{i=1}^{n} \sum_{j=m_{i}}^{k_{i}}{k_{i}\choose j}\times P)$. We believe that for many real real-time applications the time complexity of our \emph{off-line} test is reasonnable, e.g., if we consider harmonic task periods and limited $(m,k)$ parameters.

\section{Conclusion}\label{sec:conclusion}
In this study we showed that the choice of the initial $k$-sequences \emph{is} significant regarding the system schedulability. We exhibited three phenomena: (i) choosing, for each task, the initial $k$-sequence $1^k$ is not optimal, (ii) we can even sometimes start the scheduling from a (fictive) error state and (iii) the period of feasible DBP-schedules is not necessarily the task hyper-period. We then showed that any feasible DBP-schedule is periodic and we upper-bounded the length of that period. Lastly, based on our periodicity result we provided an exact schedulability test. 

\paragraph{Future work.} We left open the question of choosing efficiently/optimally the initial $k$-sequences. We left also open the following question: does the initial $k$-sequence impact the period of the (feasible) schedule? This research could be also extended to consider sporadic or asynchronous periodic task sets.

\paragraph{Acknowledgments.} The author would like to thank \mbox{Qiong} Ye \mbox{Song} and \mbox{Raymond} \mbox{Devillers} for taking part in interesting discussions. Finally, comments of anonymous reviewers helped improving the presentation of the paper.

\bibliographystyle{acm}
\bibliography{../doc.bib}
\end{document}